\numberwithin{equation}{section} 
\numberwithin{figure}{section} 
  \theoremstyle{plain}
  \newtheorem{thm}{Theorem}[section]
  \theoremstyle{plain}
  \theoremstyle{plain}
  \theoremstyle{remark}
  \theoremstyle{remark}
  \theoremstyle{plain}
\tikzset{
circleA/.style={
  circle,
  inner sep=0pt,
  text width=6mm,
  align=center,
  draw=black,
  fill=white
  }
}
\def\<{{\langle }}
\def\>{{\rangle }}
\def\<{{\langle }}
\def\>{{\rangle }}
\begin{document}


\title{Controlled not connectivity in the Clifford group}

\author{Oscar Perdomo}
 \email{perdomoosm@ccsu.edu}
\author{Reilly Ratcliffe}%
 \email{reillyratcliffe@my.ccsu.edu}
\affiliation{
Central Connecticut State University 
}
%



\date{\today}

\begin{abstract}   The Clifford group is the set of gates generated by $CZ$ gates and the two local gates $P=\begin{pmatrix} 1&0\\0&i\end{pmatrix}$ and  $H=\frac{1}{\sqrt{2}} \begin{pmatrix} 1&1\\1&-1\end{pmatrix}$. It is known that, for a two qubit system, the Clifford group $\mathcal{C}_2$ is a subgroup of order  92160 of the group of $4$ by $4$ unitary matrices. It is also known that the local Clifford gates $\mathcal{LC}_2$ is a subgroup of order 4608 of the group $\mathcal{C}_2$. In order to better understand the set $\mathcal{C}_2$, we make two matrices $U_1$ and $U_2$ in $\mathcal{C}_2$ equivalent if $U_1=VU_2$ for some $V\in \mathcal{LC}_2$. We show that this equivalence relation splits $\mathcal{C}_2$ into 20 orbits, $O_1,\dots, O_{20}$, each with 4608 elements. Moreover, for each orbit $O_i$, $CZ O_i$ intersects 9 different orbits $O_{i1},\dots,O_{i9}$ where $O_{ij}\ne O_i$ and with
$CZO_i\cap O_{ij}$ containing 512 matrices for each $j=1,2,\dots,9$. The link   \url{https://www.youtube.com/watch?v=lcYtB2tnXFw&t=685s} leads you to a YouTube video that explains the most important results in this paper.\end{abstract}

\maketitle
\includegraphics[width=.5\textwidth]{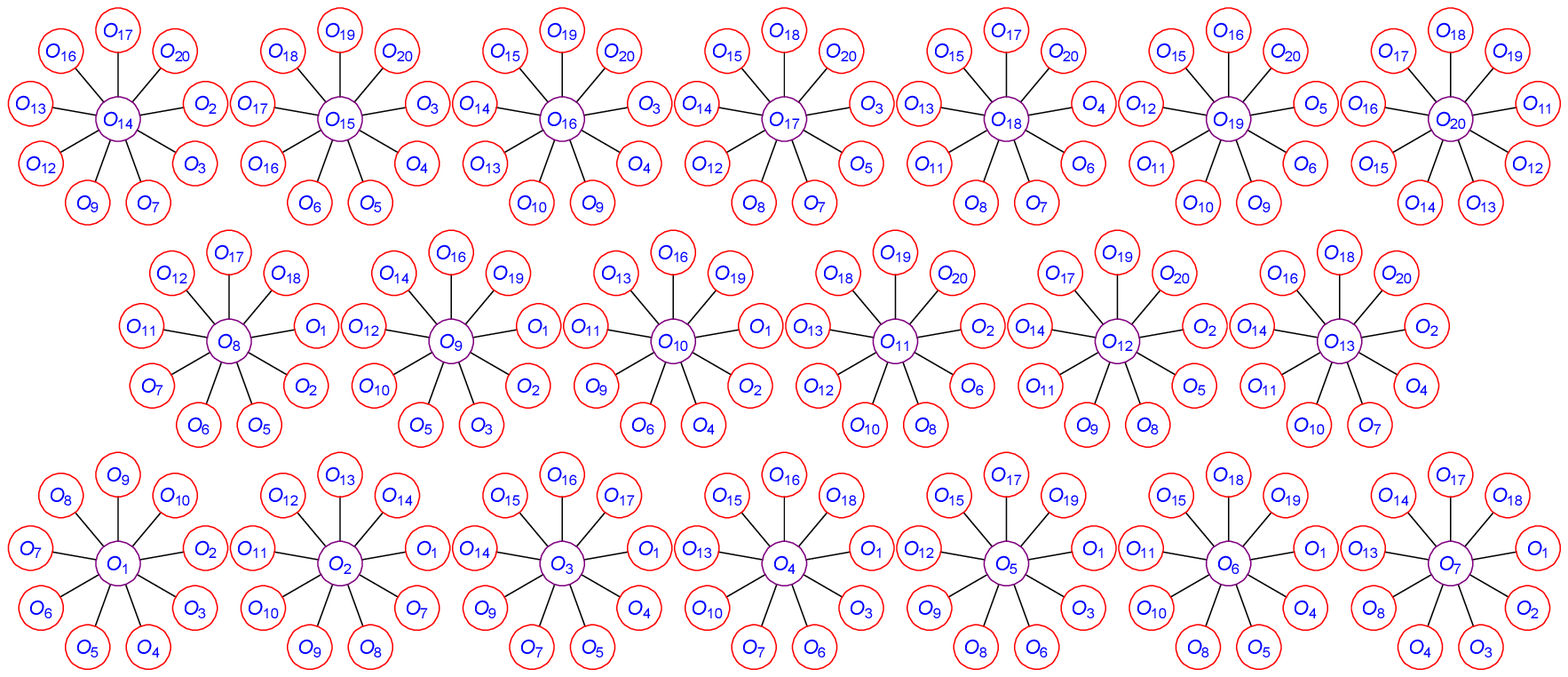}


\section{Introduction} The Clifford group $\mathcal{C}_1$ is the subgroup of 2 by 2 unitary matrices generated by the Hadamard gate $H$ and the gate $P$. A direct computation shows that this group has 192 matrices. We also have that the group

$$\mathcal{LC}_2=\{A\otimes B\, :\,  A, B \in \mathcal{C}_1 \} $$

is a subgroup of the 4 by 4 unitary matrices with 4608 elements. The $CZ$ gate acting on a two qubit system is defined by the matrix

$$ CZ=\left(
\begin{array}{cccc}
1& 0 &0&0 \\
 0& 1 &0&0\\
 0& 0 &1&0\\
 0& 0 &0&-1
\end{array}
\right).
$$

The Clifford group $\mathcal{C}_2$ is the subgroup of 4 by 4 matrices generated by $\mathcal{LC}_2$ and the $CZ$ matrix.  A direct verification shows that the matrices $(I_2\otimes H)CZ(I_2\otimes H)$ and $(H\otimes I_2)CZ(H\otimes I_2)$, where $I_2$ is the 2 by 2 identity matrix, represent the two CNOT gates. For this reason, in order to generate $\mathcal{C}_2$ we can replace the $CZ$ matrix with any of the two CNOT gates. It is known that the group $\mathcal{C}_2$  has 92160 matrices, see \cite{CRSN}.

It is not difficult to show that the relation: ``$U_1\sim U_2$  if $U_1=VU_2$, for some matrix $V\in \mathcal{LC}_2$'', defines an equivalence relation in $\mathcal{C}_2$. In this paper we describe the quotient space of this equivalence relation and  we use it to study how the the $CZ$ gate acts on the group $\mathcal{C}_2$. We prove that any matrix in $\mathcal{C}_2$ can be prepared using local gates in $\mathcal{LC}_2$ and at most 3 $CZ$ gates. We  point out that it is known that any gate acting on a 2-qubit system can be prepared using local gates and at most 3 $CZ$ gates, see \cite{SMB} or \cite{VW2} for example. From the previous observations on gates acting on 2-qubit systems, we can say that one of our contributions in this paper is that, if the gate is Clifford, then the local gates can be chosen to be Clifford as well. For an $n$-qubit system with $n>2$ the problem of finding the minimum amount $k$ of $CZ$ gates such that any gate can be prepared using local gates and at most $k$ $CZ$ gates is an open problem. For the case of 3-qubit systems it is known that any gate can be prepared with local gates and at most 20 $CZ$ gates. The 20 $CZ$ gates may not be the optimal number, this unknown number must be greater than 13. See  Iten, Colbeck et al. \cite{IC}.

\section{Main results}

This section shows the main theorem.

\begin{thm}The equivalence relation ``$U_1\sim U_2$  if $U_1=VU_2$, for some matrix $V\in \mathcal{LC}_2$'', splits the group $\mathcal{C}_2$ into 20 orbits, or equivalence classes, each one of size $4608$. If we call $O_1$ the orbit that contains the identity matrix, then the set $CZ\, O_1=\{CZ\, U:U\in O_1\}$ intersects 9 new orbits that we call $O_2,\dots, O_{10}$. Moreover, the union of the orbits $CZO_i$ with $i=1,\dots, 10$ intersect the orbits $O_1,\dots O_{10}$ and also 9 new orbits that we call $O_{11},\dots, O_{19}$. Finally each orbit $CZO_i$ with $i=11,\dots , 19$ intersects a new orbit that we call $O_{20}$. We also have that for each pair of orbits $O_i$ and $O_j$ we have that either $CZO_i\cap O_j$ is the empty set or $CZ O_i\cap O_j$ has 512 matrices.
The action of the gate $CZ$ on each of the 20 orbits is shown in the Figure at the end of this paper. In particular we have that for every $i$, $CZO_i$ intersect exactly 9 orbits. See Figure at the beginning of the paper. We also have that every gate in $\mathcal{C}_2$ can be prepared with local Clifford gates and at most $3$ $CZ$ gates.
\end{thm}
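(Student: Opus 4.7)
The plan is to realize the equivalence classes as the left cosets of $\mathcal{LC}_2$ in $\mathcal{C}_2$, to control the action of $CZ$ on this coset space by a single stabilizer calculation, and then to run a breadth-first search on the resulting orbit graph starting at $O_1$.

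First, since $U_1 \sim U_2$ means $U_1 = V U_2$ with $V \in \mathcal{LC}_2$, the orbits are exactly the left cosets of $\mathcal{LC}_2$ in $\mathcal{C}_2$, and Lagrange gives $|\mathcal{C}_2/\mathcal{LC}_2| = 92160/4608 = 20$ orbits of size $4608$.

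Next, to establish the nine-fold intersection pattern, I would consider the subgroup
$$S := \mathcal{LC}_2 \cap CZ\,\mathcal{LC}_2\,CZ = \{V \in \mathcal{LC}_2 : CZ\,V\,CZ \in \mathcal{LC}_2\}.$$
For any representative $g_i$ of $O_i$, the map $V \mapsto \mathcal{LC}_2 \cdot CZ\,V\,g_i$ from $\mathcal{LC}_2$ to $\mathcal{C}_2/\mathcal{LC}_2$ has fibers equal to the left cosets of $S$ in $\mathcal{LC}_2$, so $CZ\,O_i$ meets exactly $|\mathcal{LC}_2|/|S|$ orbits, each in $|S|$ matrices, and the count is independent of $i$. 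To compute $|S|$, I would write $V = A \otimes B$ and expand $CZ(A \otimes B)CZ$ in $2 \times 2$ block form; a short case check shows the result is a tensor product exactly when both $A$ and $B$ are monomial (diagonal or anti-diagonal) in $\mathcal{C}_1$, and in that case the image is an explicit $Z$-shifted tensor product. Counting the monomial elements of $\mathcal{C}_1$ gives $64$ (32 diagonal, generated by $P$ together with the eighth roots of unity contained in $\mathcal{C}_1$, plus 32 anti-diagonal obtained by multiplying the diagonals by $X$), and dividing by the $8$-element phase redundancy inherent in the tensor product yields $|S| = 64 \cdot 64/8 = 512$, hence $|\mathcal{LC}_2|/|S| = 9$.

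With the nine-fold intersection established, I would enumerate the twenty orbits by breadth-first search from $O_1 = \mathcal{LC}_2$. Since $CZ \notin \mathcal{LC}_2$, the nine orbits in $CZ\,O_1$ are all new, giving $O_2, \dots, O_{10}$. Because $CZ^2 = I$, each $CZ\,O_i$ with $i \in \{2, \dots, 10\}$ contains $O_1$, leaving eight further orbits to locate per $i$. I would check, by tracking the induced action on the Pauli group modulo phases (a well-defined invariant of each coset), that $\bigcup_{i=1}^{10} CZ\,O_i$ covers exactly nineteen orbits, producing nine new orbits $O_{11}, \dots, O_{19}$, and that the remaining orbit $O_{20}$ is the unique new orbit appearing in $\bigcup_{i=11}^{19} CZ\,O_i$. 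The preparation claim is then immediate: any $U \in \mathcal{C}_2$ sits at BFS-depth at most three, so $U = V_0\,CZ\,V_1\,CZ\,V_2\,CZ\,V_3$ with $V_j \in \mathcal{LC}_2$ (and some factors possibly trivial).

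The main obstacle is the third step: the nine-fold intersection is forced algebraically, but certifying that precisely nine new orbits appear at depth two and only one at depth three requires producing enough invariants of the cosets to distinguish them. A structurally cleaner route uses the well-known isomorphism $\mathcal{C}_2/(\text{center}\cdot \text{Pauli group}) \cong Sp(4, 2) \cong S_6$, under which the twenty cosets become the $\binom{6}{3} = 20$ ordered partitions of $\{1, \dots, 6\}$ into two three-element blocks stabilized by $S_3 \times S_3$; identifying the image of $CZ$ as a specific involution in $S_6$ then reduces the enumeration to elementary combinatorics on ordered partitions.
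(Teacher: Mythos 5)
Your first two steps are correct and take a genuinely different route from the paper, whose proof is a direct computer enumeration of all $92{,}160$ gates. Viewing the classes as the cosets $\mathcal{LC}_2U$ and invoking Lagrange gives the $20$ classes of size $4608$ immediately, and your stabilizer argument with $S=\mathcal{LC}_2\cap CZ\,\mathcal{LC}_2\,CZ$ is sound: the fibers of $V\mapsto \mathcal{LC}_2\,CZ\,V g_i$ are indeed cosets of $S$, the monomial characterization of when $CZ(A\otimes B)CZ$ is local is right (conjugate $Z\otimes I$ and $I\otimes Z$ and use that a one-qubit Clifford normalizes $\langle Z\rangle$ exactly when it is diagonal or anti-diagonal), and the counts $64$ monomial elements of $\mathcal{C}_1$, $|S|=64\cdot 64/8=512$, $|\mathcal{LC}_2|/|S|=9$ are all correct. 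This proves, uniformly in $i$ and without brute force, the statements ``each $CZ\,O_i$ meets exactly $9$ classes'' and ``$CZ\,O_i\cap O_j$ is empty or has $512$ matrices,'' which is more structural than what the paper offers.

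The genuine gap is the layer structure, which you flag yourself but do not close, and it is precisely what the remaining claims rest on: that $\bigcup_{i=2}^{10}CZ\,O_i$ produces exactly nine new classes $O_{11},\dots,O_{19}$, that the single remaining class $O_{20}$ first appears at depth three, and hence that every Clifford gate needs at most three $CZ$ gates. This does not follow from the $9$-regularity you established: a $9$-regular incidence pattern on $20$ classes is compatible, by counting alone, with (say) only one new class at depth two and diameter larger than three, so something beyond steps one and two is required. Your proposed fixes are only gestures: ``tracking the induced action on the Pauli group modulo phases'' is not accompanied by a proof that this invariant separates the relevant cosets, and the $S_6$ route, while plausible ($|Sp(4,2)|=720$, image of $\mathcal{LC}_2$ of index $20$), still needs you to verify that the quotient is legitimate for this problem (the scalars and the two-qubit Pauli group lie inside $\mathcal{LC}_2$, so cosets descend), that the image of $\mathcal{LC}_2$ is conjugate to the block stabilizer $S_3\times S_3$ of an ordered $3+3$ partition, which involution of $S_6$ represents $CZ$, and then to carry out the resulting combinatorics; none of this is done. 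As written, the proposal proves the orbit count, orbit size, and the ``$9$ classes, $512$ each'' pattern, but the identification of the layers, the connectivity figure, and the ``at most $3$ $CZ$ gates'' statement remain unproven; the paper settles exactly these points by exhaustive computation, and your argument would need either that computation (now reducible to a small certified search on $20$ cosets) or a completed version of the $S_6$ reduction to match it.
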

 
 \begin{proof} This theorem follows directly from computing all $92,160$ gates in $\mathcal{C}_2$. The $4,608$ elements of $\mathcal{LC}_2$ are chosen to be the first orbit, $O_1$, notice that  $\mathcal{LC}_2$ contains the $4\times 4$ identity matrix. These elements are related to each other via the equivalence relation above. This orbit can be visualized as the first layer of $\mathcal{C}_2$. Applying the $CZ$ gate to $O_1$ gives a preliminary set of elements, $CZO_1$. None of the elements of $CZO_1$ reside in $O_1$ so they must belong to new orbits. An element from $CZO_1$ is chosen at random and the orbit it belongs to is arbitrarily named $O_2$. $\mathcal{LC}_2$ is then applied to this element to form the rest of the orbit. Elements that are in both $CZO_1$ and $O_2$ are removed from $CZO_1$ and the process is repeated by redefining $CZO_1$ to be the set $CZO_1\setminus O2$. A new random element is chosen from the redefined $CZO_1$ and $\mathcal{LC}_2$ is applied to it to form $O_3$. The elements of $O_3$ are removed from $CZO_1$ and the process is again repeated until the modified $CZO_1$ set is empty. This results in the orbits 4-10. Orbits 2-10 can be visualized as the second layer of $\mathcal{C}_2$.
To obtain the third layer of $\mathcal{C}_2$, another $CZ$ gate must be applied. It is first applied to $O_2$ to form a new preliminary set, $CZO_2$. A similar process is used to obtain the orbits of the third layer. Elements that reside in previous orbits, 1-10, are removed from $CZO_2$ and a random element is chosen from the remaining elements. $\mathcal{LC}_2$ is applied to this random element to form $O_{11}$. This process is repeated, each time removing elements from $CZO_2$ which belong to previously named orbits until $CZO_2$ is empty. Next, a $CZ$ gate is applied to $O_3$ to form another set of preliminary elements, $CZO_3$, and the process is repeated. A $CZ$ gate is applied to orbits 4-10 in a similar manner. This results in nine new orbits, 11-19. The fourth and final layer of $\mathcal{C}_2$ contains only one orbit. To obtain this orbit the $CZ$ gate is applied to any of the orbits 11-19 to form a new preliminary set of elements. Elements that reside in previous orbits are removed from this set and a random element is chosen from the remaining elements. $\mathcal{LC}_2$  is applied to this element to form $O_{20}$, the last orbit. If the $CZ$ gate is then applied to any orbit to form a preliminary set and the elements from orbits 1-20 are removed, it will result in the empty set. All 92,160 elements of $\mathcal{C}_2$ reside in the orbits 1-20.
Taking the intersection of $CZO_i$ and $O_j$ with i,j=1-20 results in twenty intersections for each choice of $i$. Nine of these intersections contain $512$ elements and eleven intersections contain zero elements.
We know that elements within each orbit are related to each other through the application of a local gate in $\mathcal{LC}_2$, by the defined equivalence relation. We also know that certain orbits are related to each other through the application of the $CZ$ gate, as showcased by the intersections of $CZO_i$ and $O_j$. Starting with any gate in $\mathcal{C}_2$, the application of local gates and no more than three $CZ$ gates are required to obtain any other gate in $\mathcal{C}_2$. Therefore, any gate in $\mathcal{C}_2$ can be prepared using local gates and at most three $CZ$ gates. 
 \end{proof}
 
 The following diagram shows the connectivity with Controlled NOT gates between the orbits:

\begin{center}
    \begin{tikzpicture}
        \node at (0,-0.5) [circleA] (A) {$O_{1}$};
        \node at (-4,2.5) [circleA] (B) {$O_{2}$};
        \node at (-3,2.5) [circleA] (C) {$O_{3}$};
        \node at (-2,2.5) [circleA] (D) {$O_{4}$};
        \node at (-1,2.5) [circleA] (E) {$O_{5}$};
        \node at (0,2.5) [circleA] (F) {$O_{6}$};
        \node at (1,2.5) [circleA] (G) {$O_{7}$};
        \node at (2,2.5) [circleA] (H) {$O_{8}$};
        \node at (3,2.5) [circleA] (I) {$O_{9}$};
        \node at (4,2.5) [circleA] (J) {$O_{10}$};
        \node at (-4,6) [circleA] (K) {$O_{11}$};
        \node at (-3,6) [circleA] (L) {$O_{12}$};
        \node at (-2,6) [circleA] (M) {$O_{13}$};
        \node at (-1,6) [circleA] (N) {$O_{14}$};
        \node at (0,6) [circleA] (O) {$O_{15}$};
        \node at (1,6) [circleA] (P) {$O_{16}$};
        \node at (2,6) [circleA] (Q) {$O_{17}$};
        \node at (3,6) [circleA] (R) {$O_{18}$};
        \node at (4,6) [circleA] (S) {$O_{19}$};
        \node at (0,9) [circleA] (T) {$O_{20}$};
        
        \draw [blue, thick] (A) -- (B); 
        \draw [red, thick] (A) -- (C);
        \draw [green, thick] (A) -- (D);
        \draw [orange, thick] (A) -- (E); 
        \draw [violet, thick] (A) -- (F);
        \draw [cyan, thick] (A) -- (G);
        \draw [magenta, thick] (A) -- (H); 
        \draw [teal, thick] (A) -- (I);
        \draw [olive, thick] (A) -- (J);
        \draw [blue, thick] (T) -- (K);
        \draw [orange, thick] (T) -- (L);
        \draw [green, thick] (T) -- (M); 
        \draw [red, thick] (T) -- (N);
        \draw [violet, thick] (T) -- (O);
        \draw [teal, thick] (T) -- (P); 
        \draw [magenta, thick] (T) -- (Q);
        \draw [cyan, thick] (T) -- (R);
        \draw [olive, thick] (T) -- (S);
        
        \draw (-4,2.16) .. controls (-2.75,1) and (-0.25,1) .. (1,2.16);
        \draw (-4,2.16) .. controls (-2.5,0.75) and (-0.5,0.75) .. (2,2.16);
        \draw (-4,2.16) .. controls (-2.75,0.5) and (1.75,0.5) .. (3,2.16);
        \draw (-4,2.16) .. controls (-2,0) and (2,0) .. (4,2.16);
        \draw [blue, thick] (B) -- (K);
        \draw [blue] (B) -- (L);
        \draw [blue] (B) -- (M);
        \draw [blue] (B) -- (N);
        
        \draw (-3,2.16) .. controls (-2.75,2) and (-2.25,2) .. (-2,2.16);
        \draw (-3,2.16) .. controls (-2.5,1.75) and (-1.15,1.75) .. (-1,2.16);
        \draw (-3,2.16) .. controls (-2,1.25) and (0,1.25) .. (1,2.16);
        \draw (-3,2.16) .. controls (-1.5,0.75) and (1.5,0.75) .. (3,2.16);
        \draw [red, thick] (C) -- (N);
        \draw [red] (C) -- 
        (O);
        \draw [red] (C) -- (P);
        \draw [red] (C) -- (Q);
        
        \draw (-2,2.16) .. controls (-1.5,1.75) and (-0.5,1.75) .. (0,2.16);
        \draw (-2,2.16) .. controls (-1.25,1.5) and (-0.25,1.5) .. (1,2.16);
        \draw (-2,2.16) .. controls (-0.5,0.75) and (2.5,0.75) .. (4,2.16);
        \draw [green, thick] (D) -- (M);
        \draw [green] (D) -- (O);
        \draw [green] (D) -- (P);
        \draw [green] (D) -- (R);
        
        \draw (-1,2.16) .. controls (-0.75,2) and (-0.25,2) .. (0,2.16);
        \draw (-1,2.16) .. controls (-0.25,1.5) and (1.25,1.5) .. (2,2.16);
        \draw (-1,2.16) .. controls (0,1.25) and (2,1.25) .. (3,2.16);
        \draw [orange, thick] (E) -- (L);
        \draw [orange] (E) -- (O);
        \draw [orange] (E) -- (Q);
        \draw [orange] (E) -- (S);
        
        \draw (0,2.16) .. controls (0.5,1.75) and (1.5,1.75) .. (2,2.16);
        \draw (0,2.16) .. controls (1,1.25) and (3,1.25) .. (4,2.16);
        \draw [violet] (F) -- (K);
        \draw [violet, thick] (F) -- (O);
        \draw [violet] (F) -- (R);
        \draw [violet] (F) -- (S);
        
        \draw (1,2.16) .. controls (1.25,2) and (1.75,2) .. (2,2.16);
        \draw [cyan] (G) -- (M);
        \draw [cyan] (G) -- (N);
        \draw [cyan] (G) -- (Q);
        \draw [cyan, thick] (G) -- (R);
        
        \draw [magenta] (H) -- (K);
        \draw [magenta] (H) -- (L);
        \draw [magenta, thick] (H) -- (Q);
        \draw [magenta] (H) -- (R);
        
        \draw (3,2.16) .. controls (3.25,2) and (3.75,2) .. (4,2.16);
        \draw [teal] (I) -- 
        (L);
        \draw [teal] (I) -- (N);
        \draw [teal, thick] (I) -- (P);
        \draw [teal] (I) -- (S);
        
        \draw [olive] (J) -- (K);
        \draw [olive] (J) -- (M);
        \draw [olive] (J) -- (P);
        \draw [olive, thick] (J) -- (S);

        \draw (-4,6.34) .. controls (-3.75,6.5) and (-3.25,6.5) .. (-3,6.34);
        \draw (-4,6.34) .. controls (-3.5,6.75) and (-2.55,6.75) .. (-2,6.34);
        \draw (-4,6.34) .. controls (-2.75,8) and (1.75,8) .. (3,6.34);
        \draw (-4,6.34) .. controls (-2,8.5) and (2,8.5) .. (4,6.34);
        
        \draw (-3,6.34) .. controls (-2.5,6.75) and (-1.5,6.75) .. (-1,6.34);
        \draw (-3,6.34) .. controls (-1.75,7.5) and (0.75,7.5) .. (2,6.34);
        \draw (-3,6.34) .. controls (-1.25,8) and (2.25,8) .. (4,6.34);
        
        \draw (-2,6.34) .. controls (-1.75,6.5) and (-1.25,6.5) .. (-1,6.34);
        \draw (-2,6.34) .. controls (-1.25,7) and (0.25,7) .. (1,6.34);
        \draw (-2,6.34) .. controls (-0.75,7.5) and (1.75,7.5) .. (3,6.34);
        
        \draw (-1,6.34) .. controls (-0.5,6.75) and (0.5,6.75) .. (1,6.34);
        \draw (-1,6.34) .. controls (-0.25,7) and (1.25,7) .. (2,6.34);
        
        \draw (0,6.34) .. controls (0.25,6.5) and (0.75,6.5) .. (1,6.34);
        \draw (0,6.34) .. controls (0.5,6.75) and (1.5,6.75) .. (2,6.34);
        \draw (0,6.34) .. controls (0.75,7) and (2.25,7) .. (3,6.34);
        \draw (0,6.34) .. controls (1,7.25) and (3,7.25) .. (4,6.34);
        
        \draw (1,6.34) .. controls (1.75,7) and (3.25,7) .. (4,6.34);
        
        \draw (2,6.34) .. controls (2.25,6.5) and (2.75,6.5) .. (3,6.34);
        
    \end{tikzpicture}
    \end{center}


\end{document}